\newcommand{\ket}[1]{\ensuremath{|{#1\rangle}}} 
\newcommand{\braket}[2]{\ensuremath{{\langle #1}|{#2 \rangle}}}
\newcommand{\ketbra}[2]{\ensuremath{|{#1 \rangle}{\langle #2}|}}
\newcommand{\mmt}[1]{#1}
\newcommand{\resp}{\text{pr}}
\newcommand{\born}{\text{Pr}}
\newcommand{\D}{\text{d}}
\providecommand{\abs}[1]{\left\lvert#1\right\rvert}
\newtheorem*{definition}{Definition}
\newtheorem*{assumption}{Assumption}
\newtheorem*{theorem}{Theorem}
\newtheorem*{lemma}{Lemma}
\begin{document}

\title{No-Go Theorem for the Composition of Quantum Systems} 

\author{Maximilian Schlosshauer}

\affiliation{Department of Physics, University of Portland, 5000 North Willamette Boulevard, Portland, Oregon 97203, USA}

\author{Arthur Fine}

\affiliation{Department of Philosophy, University of Washington, Box 353350, Seattle, Washington 98195, USA}

\pacs{03.65.Ta, 03.65.Ud, 03.67.-a} 

\begin{abstract}
Building on the Pusey--Barrett--Rudolph theorem, we derive a no-go theorem for a vast class of deterministic hidden-variables theories, including those
consistent on their targeted domain. The strength of this result throws doubt on seemingly natural assumptions (like the ``preparation independence''
of the Pusey--Barrett--Rudolph theorem) about how ``real states'' of subsystems compose for joint systems in nonentangled states. This points to constraints in modeling tensor-product states, similar to constraints demonstrated for more complex states by the Bell and Bell--Kochen--Specker theorems.
\end{abstract}

\maketitle

Studies by Pusey, Barrett, and Rudolph (PBR) \cite{Pusey:2012:np} and others \cite{Colbeck:2012:cr,Hardy:2012:rr} demonstrate a no-go theorem for properties of ontological \cite{Harrigan:2010:pl} hidden-variables models. We show that if the strategy of the demonstration is viable, it leads to a theorem like that claimed by von Neumann generations ago \cite{Neumann:1932:gq}; that is, it leads to a broad no-go theorem for deterministic hidden-variables models, including successful models known to reproduce the quantum statistics for the systems in question \cite{Kochen:1967:hu,Bell:1966:ph,Lewis:2012:qs}. This startling consequence calls for an examination of the elements essential to the strategies underlying these no-go theorems. One critical element is an assumption of how hidden variables of component systems relate to hidden variables of the composite in product states. We show that the physical rationale for composition principles of this kind overreaches. Our results, and those of PBR, highlight that the tensor-product structure required for composite systems, even for those prepared in nonentangled states, can open up new possibilities that cannot be accommodated by ontological hidden-variables models that embody classical intuitions about how hidden variables (``real states'') of quantum systems ought to compose.

\emph{Hidden variables.---}In the models under consideration, each quantum state $\ket{\psi}$ in the state space of a given system is associated with a nonempty set $\Lambda_\psi$ that supports a probability density function $p_\psi(\lambda)>0$ for $\lambda \in \Lambda_\psi$, where $\int_{\Lambda_\psi} p_\psi(\lambda) \, \D\lambda = 1$. We will refer to a complete state $\lambda$ as associated with $\ket{\psi}$ if $\lambda \in \Lambda_\psi$. With probability $p_\psi(\lambda)$, preparing $\ket{\psi}$ results in a $\lambda$ associated with $\ket{\psi}$. In general, different systems, each prepared in $\ket{\psi}$, may have different $\Lambda_\psi$ and $p_\psi(\lambda)$. 

Adopting Einstein's language for quantum incompleteness \cite{Einstein:1936:ji}, PBR call the hidden variables ``physical states'' or ``real physical states'' \cite{Pusey:2012:np}. That terminology signals that the $\lambda$s are regarded as representing real aspects of a quantum system. But since the structure of a hidden-variables model cannot actually fix the nature (or reference) of the $\lambda$s, we will use the neutral language often used by Bell \cite{Bell:1987:su} of ``complete states,'' except where realist intuitions come into play. 

The states are complete in the sense that they suffice to determine the probable responses to measurements of any observable $\mmt{M}$ defined on the state space of the system. Here and below, we assume that $\mmt{M}$ is discrete. Then we have a response function $\resp(\mmt{M}=k \mid \lambda)$ that, given a system in complete state $\lambda$, yields the probability that a measurement of $\mmt{M}$ results in eigenvalue $k$. Two elements characterize the ontological framework \cite{Harrigan:2010:pl} employed by PBR. (i) Response functions do not depend on the quantum state (unless that dependence is written into the $\lambda$s). (ii) Hidden variables do not play any role in accounting for measurement inefficiencies, so that
\begin{equation}\label{eq:unity}
\sum_{k\in S(\mmt{M})} \resp(\mmt{M}=k \mid \lambda) = 1,
\end{equation}
where $S(\mmt{M})$ is the spectrum of $\mmt{M}$. We have shown elsewhere \cite{Schlosshauer:2012:pr} that the PBR theorem requires both (i) and (ii). The Born probability $\born(\mmt{M}=k \mid \ket{\psi})$ that a measurement of $\mmt{M}$ in state $\ket{\psi}$ results in eigenvalue $k$ is obtained from
\begin{equation}\label{eq:matchborn2}
\born(\mmt{M}=k \mid \ket{\psi}) = \int_{\Lambda_\psi} \resp(\mmt{M}=k \mid \lambda)\, p_\psi(\lambda) \, \D\lambda.
\end{equation}
In the general case, this setup leaves open the possibility that the ``reality'' represented by $\lambda$ is the quantum state itself. In the deterministic case, where all response functions yield probability 0 or 1, this cannot happen.

\emph{The PBR example.---}We say that states $\ket{\psi}$ and $\ket{\phi}$ overlap just in case $\Lambda_\psi \cap \Lambda_\phi$ is nonempty. In the example PBR use to illustrate their theorem \cite{Pusey:2012:np}, they consider two quantum systems, each independently prepared in either $\ket{1}$ or $\ket{2}$, where $\abs{\braket{1}{2}}=2^{-1/2}$. Suppose the states overlap. Then there is some nonzero probability that the preparations result in $\lambda$s each associated with both  $\ket{1}$ and $\ket{2}$. States $\ket{1}$ and $\ket{2}$ span a two-dimensional space $H_0$. Consider $H=H_0\otimes H_0$, which contains the product states $\ket{x,y} \equiv \ket{x}\otimes\ket{y}$, $x,y=1,2$. Using Bell states, PBR display an orthonormal basis $\{ \ket{\xi_{xy}} \}$ of $H$ such that $\braket{\xi_{xy}}{x,y} = 0$. Then for any maximal measurement $\mmt{M}$ with eigenstates $\ket{\xi_{xy}}=\ket{k_{xy}}$ (where $k_{xy}$ is the corresponding eigenvalue), 
\begin{equation}\label{eq:yh}
\born(\mmt{M}=k_{xy} \mid \ket{x,y}) = \abs{\braket{\xi_{xy}}{x,y}}^2 = 0.
\end{equation}
If the pair of $\lambda$s associated with both $\ket{1}$ and $\ket{2}$ constituted a hidden variable $\lambda_c$ associated with all four product states $\ket{x,y}$, then the response function for $\lambda_c$ would contribute to the Born probabilities in Eq.~\eqref{eq:yh} for all four states $\ket{x,y}$ simultaneously. For such a  $\lambda_c$, Eqs.~\eqref{eq:matchborn2} and \eqref{eq:yh} imply that
\begin{equation}\label{eq:sh}
\resp(\mmt{M}=k_{xy} \mid \lambda_c) = 0 \quad \text{for $x,y=1,2$}. 
\end{equation}
Thus, the $\mmt{M}$ measurement would have no outcome, contradicting Eq.~\eqref{eq:unity}.

The demonstration of a violation of Eq.~\eqref{eq:unity} does not use the full probability rule of Eq.~\eqref{eq:matchborn2}. All one uses is that where the Born probabilities say ``no'' to a measurement outcome, as in Eq.~\eqref{eq:yh}, the appropriate response function also says ``no,'' as in Eq.~\eqref{eq:sh}. This motivates the following definition.

\begin{definition}
\emph{(Tracking)} Consider a system $S$ with state space $H$. A hidden variable $\lambda$ \emph{tracks} $\ket{\psi} \in H$ on $S$ if and only if, for all observables $\mmt{M}$ on $S$, whenever $\born(\mmt{M}=k \mid \ket{\psi})=0$, then $\resp(\mmt{M}=k \mid \lambda)=0$ \footnote{Tracking is similar to  ``possibilistic completeness,'' assumed in a related no-go theorem \cite{Hardy:2012:rr}.}.
\end{definition}

Thus $\lambda$ tracks $\ket{\psi}$ if and only if whenever the outcome probabilities assigned by $\lambda$ are nonzero, the Born outcome probabilities for a system prepared in $\ket{\psi}$ are nonzero. Equation~\eqref{eq:matchborn2} implies that if $\lambda$ is associated with $\ket{\psi}$ on $S$, then $\lambda$ tracks $\ket{\psi}$ on $S$. Thus, tracking $\ket{\psi}$ is a necessary condition for association with $\ket{\psi}$. The converse is not true.

\emph{Composition.---}Because complete states are complete only for measurements on a given state space, deriving Eq.~\eqref{eq:sh} from Eq.~\eqref{eq:yh} requires assumptions about complete states for composites beyond what is built into the hidden-variables structure so far. Thus PBR introduce an assumption they call preparation independence: ``systems \dots\ prepared independently have independent physical states'' (see p.~475 of Ref.~\cite{Pusey:2012:np}). 

The ``independence'' referred to here is twofold. One aspect encompasses stochastic independence ($PI_{st}$) of the $\lambda$s that result from preparing the quantum states. More importantly, to derive a violation of Eq.~\eqref{eq:unity}, preparation independence must encompass a composition principle, $PI_c$, that allows those $\lambda$s to function independently of the quantum states actually prepared. We could capture this compositional aspect of independence by assuming that if $\lambda_1$ is associated with $\ket{1}$ of system $S_1$ and $\lambda_2$ is associated with $\ket{2}$ of $S_2$, then ($\lambda_1$, $\lambda_2$) constitutes a complete state associated with $\ket{1}\otimes \ket{2}$ for the composite system formed from $S_1$ and $S_2$. In fact, the following weaker assumption suffices.

\begin{definition}
\emph{($PI_c$)} If $\lambda_1$ is associated with $\ket{1}$ of system $S_1$ and $\lambda_2$ is associated with $\ket{2}$ of system $S_2$, then the pair ($\lambda_1$, $\lambda_2$) tracks $\ket{1}\otimes \ket{2}$ on the composite system formed from $S_1$ and $S_2$ \footnote{Our argument could allow more general functions defined on $(\lambda_1, \lambda_2)$, provided those functions do not depend on specific preparation procedures, measurements, or quantum states.}.
\end{definition}    

According to $PI_c$, although each $\lambda$ is associated with some pure state, these need not be pure states actually prepared on a given occasion. Thus, suppose two systems are prepared independently---say, one in $\ket{1}$ and the other in $\ket{2}$---resulting (respectively) in complete states $\lambda_1$ and $\lambda_2$. Then $PI_c$ implies that ($\lambda_1$, $\lambda_2$) tracks $\ket{1}\otimes \ket{2}$ on the composite system. But $PI_c$ also implies (counterfactually) that had different states $\ket{\alpha_1}$ and $\ket{\alpha_2}$ been prepared with which (respectively) complete states $\lambda_1$ and $\lambda_2$ are also associated, then the same pair ($\lambda_1$, $\lambda_2$) would simultaneously track $\ket{\alpha_1}\otimes \ket{\alpha_2}$ and $\ket{1}\otimes \ket{2}$ on the composite (even had the hypothetical preparations turned out different $\lambda$s).

When applied to the PBR example where $\abs{\braket{1}{2}}=2^{-1/2}$, the counterfactuals supported by $PI_c$ yield the composition rule PBR employ to move from Eq.~\eqref{eq:yh} to Eq.~\eqref{eq:sh}. This is so because $PI_c$ implies that if independent preparations of two systems, each in either $\ket{1}$ or $\ket{2}$, result in complete states associated with both $\ket{1}$ and $\ket{2}$, then there is a $\lambda_c$ that simultaneously tracks all four states $\ket{x,y} \equiv \ket{x}\otimes\ket{y}$, $x,y=1,2$, on the composite formed from $S_1$ and $S_2$.

To cover the general case $0 < \abs{\braket{1}{2}}^2 <1$ developed by PBR, we can extend $PI_c$ to apply to arbitrary tensor products $\ket{x_1}\otimes\ket{x_2}\otimes \ket{x_3}\otimes \cdots$, $x_i \in \{1,2\}$. This results in the compactness principle, tailored to tracking, that we formulated elsewhere \cite{Schlosshauer:2012:pr} as a composition rule sufficient for the PBR argument. While we may add $PI_\text{st}$, it is not needed to derive a violation of Eq.~\eqref{eq:unity}.

PBR do not discuss the physical rationale for assuming $PI_c$, which may seem natural from a realist point of view, where the $\lambda$s associated with $\ket{1}$ and $\ket{2}$ represent all the hard facts relevant to probable measurement outcomes on the respective systems. In the state $\ket{1}\otimes \ket{2}$, the subsystems are not entangled and not interacting (at least not in an entangling manner), and, hence, one might regard their composite as not generating any new facts. So facts about probable outcomes on the two subsystems, taken conjointly, should constitute all the facts about likely outcomes (i.e., about tracking) on the composite system in the product state. Below we develop a challenge to this rationale.

\emph{Tracking.---}We now show that because the PBR strategy requires only tracking rather than association, it is not specific to models with overlap (``epistemic'' models \cite{Harrigan:2010:pl}) but also targets nonoverlapping (``ontic'') models. First, we modify the antecedent (``if'' clause) of $PI_c$ to produce a version purely phrased in terms of tracking.

\begin{definition}
\emph{($PI_{c,tr}$)} If $\lambda_1$ tracks $\ket{1}$ on system $S_1$ and $\lambda_2$ tracks $\ket{2}$ on system $S_2$, then the pair ($\lambda_1$, $\lambda_2$) tracks the product state $\ket{1}\otimes \ket{2}$ on the composite system formed from $S_1$ and $S_2$.
\end{definition}

Since association implies tracking, and since the conclusions of $PI_c$ and $PI_{c,tr}$ are identical, it follows that if $PI_{c,tr}$ and the antecedent of $PI_c$ hold, then $PI_c$ also holds. Thus $PI_{c,tr}$ implies $PI_c$, and $PI_{c,tr}$ is sufficient to generate the PBR contradiction. 

What is the physical rationale for assuming $PI_{c,tr}$? As in $PI_c$, we can think that the $\lambda$s associated with $\ket{1}$ and $\ket{2}$ represent all the hard facts relevant to measurement outcomes with nonzero probability (tracking) on the respective systems. Forming the composite described by $\ket{1}\otimes \ket{2}$ should not generate new facts about outcomes, for the same reasons as in the case of $PI_c$. Hence, all the facts about outcomes that have nonzero probability (tracking) on the components, taken together, should be sufficient to account for outcomes with nonzero probability on the composite. Thus $PI_c$ and $PI_{c,tr}$ have the same rationale.

Since the antecedent of $PI_{c,tr}$ is weaker than the antecedent of $PI_c$, it is more easily satisfied. Thus, as we will now see, $PI_{c,tr}$ opens up the possibility of no-go results broader than those of PBR.

\emph{Deterministic models.---}In deterministic hidden-variables models, all probabilities given by the response functions are 0 or 1. Thus, we write $M(\lambda)$ to denote the eigenvalue $k$ that obtains if $\mmt{M}$ is measured. The Bell--Kochen--Specker (BKS) theorem \cite{Bell:1966:ph,Kochen:1967:hu} targets such models where the state space has dimension $\ge 3$. Essential to that theorem is the rule that an eigenvalue $k$ is assigned to an observable $\mmt{M}$ if and only if the spectral projector $\mmt{P}_k$ belonging to $k$ takes the value 1. (This is equivalent to the function rule assumed in Ref.~\cite{Kochen:1967:hu}, or the additivity of values for commuting operators assumed by Bell \cite{Fine:1978:oo,Mermin:1993:tt}.) The rule mirrors the connection $\born(\mmt{M}=k \mid \ket{\psi}) =\born(\mmt{P}_k=1 \mid \ket{\psi})$ built into the Born probabilities. In two dimensions it is harmless, although in certain higher dimensions we have shown that it falls to compactness \cite{Schlosshauer:2012:pr}. Here, we weaken the rule and consider deterministic hidden-variables theories that are only required to follow it in one direction:
\begin{assumption}
\emph{($A$)} For any state $\ket{\psi}$: if $\lambda \in \Lambda_\psi$ and $\mmt{P}_k(\lambda)=0$, then $M(\lambda)\not=k$, where $\mmt{P}_k$ is the spectral projector onto the $k$ eigenspace of $\mmt{M}$.
\end{assumption}
We now show that for every deterministic hidden-variables theory on a two-dimensional space $H_0$ that satisfies assumption $A$, any two distinct, nonorthogonal quantum states are simultaneously tracked; this means that in such models, the antecedent of $PI_{c,tr}$ is always satisfied.
\begin{lemma}
Suppose $0<\abs{\braket{\psi}{\phi}}^2<1$ and an ontological deterministic hidden-variables theory governs $\lambda$s on the two-dimensional Hilbert space $H_0$ spanned by $\ket{\psi}$ and $\ket{\phi}$. If assumption $A$ holds on $H_0$, then associated with each of these kets $\ket{\psi}$ and $\ket{\phi}$ is a set of measure $\abs{\braket{\psi}{\phi}}^2$ consisting of complete states each of which also tracks the other ket on $H_0$.
\end{lemma}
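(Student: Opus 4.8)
The plan is to single out, inside $\Lambda_\psi$, the complete states on which the projector $\mmt{Q}\equiv\ketbra{\phi^\perp}{\phi^\perp}$ onto the unit vector $\ket{\phi^\perp}\in H_0$ orthogonal to $\ket{\phi}$ takes the value $0$; to show that assumption $A$ forces each such $\lambda$ to track $\ket{\phi}$ on $H_0$; and then to read off the $p_\psi$-measure of that set from the Born rule. The first ingredient is the geometry of Born-forbidden outcomes in two dimensions: for an observable $\mmt{N}$ on $H_0$ and an eigenvalue $k$, one has $\born(\mmt{N}=k\mid\ket{\phi})=0$ if and only if the $k$-eigenspace is one-dimensional and spanned by $\ket{\phi^\perp}$, equivalently if and only if the spectral projector of $\mmt{N}$ belonging to $k$ is $\mmt{Q}$ (were the $k$-eigenspace all of $H_0$, the Born probability would be $1$, not $0$). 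Hence, by the definition of tracking, $\lambda$ tracks $\ket{\phi}$ on $H_0$ precisely when $N(\lambda)\neq k$ for every pair $(\mmt{N},k)$ whose spectral projector is $\mmt{Q}$.

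The next step feeds this into assumption $A$. Since $\mmt{Q}$ is itself an observable on $H_0$ with nondegenerate spectrum $\{0,1\}$, in a deterministic model $Q(\lambda)\in\{0,1\}$ is well defined and, by Eq.~\eqref{eq:unity}, exactly one of $\resp(\mmt{Q}=0\mid\lambda)$ and $\resp(\mmt{Q}=1\mid\lambda)$ equals $1$. If $\lambda\in\Lambda_\psi$ satisfies $Q(\lambda)=0$, then assumption $A$ applied with $\mmt{P}_k=\mmt{Q}$ yields $N(\lambda)\neq k$ for every $(\mmt{N},k)$ with spectral projector $\mmt{Q}$, which by the first step says exactly that $\lambda$ tracks $\ket{\phi}$ on $H_0$. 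So $\Gamma\equiv\{\lambda\in\Lambda_\psi : Q(\lambda)=0\}$ is a set of complete states associated with $\ket{\psi}$, each of which tracks $\ket{\phi}$. To determine its size I would use that $\resp(\mmt{Q}=0\mid\cdot)$, restricted to $\Lambda_\psi$, is the indicator function of $\Gamma$, so that Eq.~\eqref{eq:matchborn2} gives $\int_\Gamma p_\psi(\lambda)\,\D\lambda=\born(\mmt{Q}=0\mid\ket{\psi})=\abs{\braket{\phi}{\psi}}^2=\abs{\braket{\psi}{\phi}}^2$, the eigenvalue-$0$ eigenspace of $\mmt{Q}$ being spanned by $\ket{\phi}$. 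This is the claimed measure, which is positive by hypothesis; interchanging the roles of $\ket{\psi}$ and $\ket{\phi}$, and using the projector onto $\ket{\psi^\perp}$ instead, gives the symmetric statement for $\Lambda_\phi$.

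The one point I expect to require care is the final bit of bookkeeping: it is the event $\{\mmt{Q}=0\}$, and not its complement $\{\mmt{Q}=1\}$ (which carries measure $1-\abs{\braket{\psi}{\phi}}^2$), that assumption $A$ ties to tracking $\ket{\phi}$, and it is precisely this event whose Born weight in $\ket{\psi}$ comes out to $\abs{\braket{\psi}{\phi}}^2$; getting the two confused would yield the wrong measure. Everything else is essentially forced once the two-dimensional geometry is made explicit, and the argument uses only the forward implication in assumption $A$, the normalization Eq.~\eqref{eq:unity}, and the probability rule Eq.~\eqref{eq:matchborn2}; in particular it appeals neither to noncontextuality, nor to the converse (Bell--Kochen--Specker) direction of assumption $A$, nor to stochastic independence $PI_\text{st}$.
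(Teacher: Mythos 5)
Your proposal is correct and follows essentially the same route as the paper: it isolates the subset of $\Lambda_\psi$ on which the projector onto $\ket{\phi^\perp}$ takes the value $0$, uses the two-dimensional geometry to identify every Born-forbidden outcome for $\ket{\phi}$ with that projector, applies assumption $A$ to conclude tracking, and reads off the measure $\abs{\braket{\psi}{\phi}}^2$ from Eq.~\eqref{eq:matchborn2}. Your explicit handling of the degenerate eigenspace case and of the indicator-function bookkeeping only spells out steps the paper leaves implicit.
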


\begin{proof}
Since every $\lambda \in \Lambda_\psi$ tracks $\ket{\psi}$ on $H_0$, we show that a subset $S \subset \Lambda_\psi$ of these $\lambda$s also tracks $\ket{\phi}$ on $H_0$. Let $\mmt{P}^\perp = \ketbra{\phi^\perp}{\phi^\perp}$ be the projector along the state vector $\ket{\phi^\perp}$ in $H_0$ that is orthogonal to $\ket{\phi}$. Then $\born(\mmt{P}^\perp=1 \mid \ket{\psi}) = 1-\abs{\braket{\psi}{\phi}}^2
\not=1$. Hence, there exist $\lambda \in \Lambda_\psi$ such that $\mmt{P}^\perp(\lambda)=0$; otherwise the overall probability of having $\mmt{P}^\perp=1$ would be 1. With respect to the density $p_\psi(\lambda)$, the set $S$ of such $\lambda$s has measure equal to $\born(\mmt{P}^\perp=0 \mid \ket{\psi})=\abs{\braket{\psi}{\phi}}^2$. Consider any $\mmt{M}$ for which $\born(\mmt{M}=k \mid \ket{\phi})= \born(\mmt{P}_k=1 \mid \ket{\phi})=0$. Then, since the space is two-dimensional, the projector $\mmt{P}_k$ on the $k$ eigenspace of $\mmt{M}$ just projects onto $\ket{\phi^\perp}$; so $\mmt{P}_k=\mmt{P}^\perp$. For any $\lambda \in S$, $\mmt{P}^\perp(\lambda)=0$. Assumption $A$ then implies that $M(\lambda)\not=k$; that is, $\resp(\mmt{M}=k \mid \lambda) = 0$. Thus, every $\lambda \in S$ tracks $\ket{\phi}$ on $H_0$. The same argument applies if we interchange $\ket{\psi}$ and $\ket{\phi}$.
\end{proof}

\begin{theorem}
No ontological deterministic hidden-variables theory satisfying assumption $A$ and the composition principle $PI_{c,tr}$ can reproduce the predictions of quantum mechanics.
\end{theorem}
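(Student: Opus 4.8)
The plan is to combine the Lemma with $PI_{c,tr}$ to manufacture, in a genuine quantum scenario, a single complete state of a composite that simultaneously tracks all four product states of the PBR construction, and then to replay the PBR argument of Eqs.~\eqref{eq:yh}--\eqref{eq:sh} to violate Eq.~\eqref{eq:unity}. The key observation is that the theorem only needs to exhibit \emph{one} quantum prediction that cannot be reproduced, so it suffices to work in the concrete PBR scenario $\abs{\braket{1}{2}}^2 = 1/2$, with $H_0$ the two-dimensional space spanned by $\ket{1}$ and $\ket{2}$.

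First I would apply the Lemma with $\ket{\psi}=\ket{1}$, $\ket{\phi}=\ket{2}$: it yields a set of positive measure $1/2$ consisting of complete states associated with $\ket{1}$ (hence tracking $\ket{1}$ on $H_0$) that in addition track $\ket{2}$ on $H_0$. Pick one such $\lambda_1 \in \Lambda_1$. Symmetrically, pick $\lambda_2 \in \Lambda_2$ that tracks both $\ket{1}$ and $\ket{2}$ on $H_0$. Next I would invoke $PI_{c,tr}$ four times: for each $x,y \in \{1,2\}$, $\lambda_1$ tracks $\ket{x}$ on $S_1$ and $\lambda_2$ tracks $\ket{y}$ on $S_2$, so the single pair $\lambda_c \equiv (\lambda_1,\lambda_2)$ tracks $\ket{x}\otimes\ket{y} = \ket{x,y}$ on the composite $H = H_0\otimes H_0$. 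Thus $\lambda_c$ realizes exactly the hypothetical simultaneously-tracking state of the PBR discussion—note that $A$ has been used only inside the Lemma, on the two-dimensional $H_0$, not on $H$.

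Then I would close the argument as PBR do. Take the maximal measurement $\mmt{M}$ with eigenstates $\ket{\xi_{xy}} = \ket{k_{xy}}$ satisfying $\braket{\xi_{xy}}{x,y}=0$. By Eq.~\eqref{eq:yh}, $\born(\mmt{M}=k_{xy}\mid\ket{x,y}) = 0$ for all $x,y$, so tracking of each $\ket{x,y}$ by $\lambda_c$ forces $\resp(\mmt{M}=k_{xy}\mid\lambda_c) = 0$ for all four $x,y$, reproducing Eq.~\eqref{eq:sh}. But $\{k_{xy}\}_{x,y=1,2} = S(\mmt{M})$, so $\sum_{x,y}\resp(\mmt{M}=k_{xy}\mid\lambda_c) = 0$, contradicting Eq.~\eqref{eq:unity}. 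Hence no ontological deterministic hidden-variables theory satisfying $A$ and $PI_{c,tr}$ can reproduce the quantum statistics in this scenario, which proves the theorem. (For completeness one can remark that every case $0<\abs{\braket{1}{2}}^2<1$ is covered by feeding the Lemma into the $n$-fold extension of $PI_{c,tr}$ together with the general PBR construction, but the case above already suffices.)

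The step I expect to be the main (conceptual rather than computational) obstacle is justifying that $PI_{c,tr}$ may be applied to all four product states with the \emph{same} fixed pair $(\lambda_1,\lambda_2)$—even though these $\lambda$s in fact arose from preparing only $\ket{1}$ and $\ket{2}$. This is precisely licensed because the antecedent of $PI_{c,tr}$ mentions only tracking, not which state was prepared, and the Lemma guarantees that each of $\lambda_1$ and $\lambda_2$ tracks both $\ket{1}$ and $\ket{2}$. Everything else is bookkeeping plus the already-established PBR calculation.
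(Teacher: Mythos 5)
Your proposal is correct and follows essentially the same route as the paper's own proof: feed the Lemma (in the $\abs{\braket{1}{2}}^2=1/2$ case) into $PI_{c,tr}$ to obtain a single pair tracking all four products $\ket{x,y}$, then replay Eqs.~\eqref{eq:yh}--\eqref{eq:sh} to contradict Eq.~\eqref{eq:unity}, with the general overlap handled by the extended composition principle and PBR's circuit. The only (immaterial) difference is that the paper takes $\lambda_c=(\lambda,\lambda)$ with one $\lambda$ tracking both kets on both systems, whereas you take $\lambda_1\in\Lambda_1$ and $\lambda_2\in\Lambda_2$ separately; both choices satisfy the antecedent of $PI_{c,tr}$ for all four product states, and your explicit justification of applying it with the same fixed pair is exactly the counterfactual point the paper makes.
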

\begin{proof}
Consider systems $S_1$ and $S_2$, each independently prepared in either $\ket{1}$ or $\ket{2}$, with $\abs{\braket{1}{2}}=2^{-1/2}$. Suppose an ontological deterministic hidden-variables theory satisfying assumption $A$ governs $\lambda$s on the two-dimensional space $H_0$ spanned by $\ket{1}$ and $\ket{2}$. The above lemma establishes that there exists a nonempty set $S$ of $\lambda$s that track both $\ket{1}$ and $\ket{2}$ on both $S_1$ and $S_2$. Since $PI_{c,tr}$ supports the same counterfactuals as $PI_c$, $PI_{c,tr}$ implies that for any $\lambda \in S$, $\lambda_c=(\lambda, \lambda)$ tracks the four states $\ket{x,y}$, $x,y=1,2$, on the composite system represented by $H_0\otimes H_0$. Thus, for this $\lambda_c$ and the PBR measurement $M$, we obtain the PBR contradiction, Eq.~\eqref{eq:sh}. For arbitrary distinct nonorthogonal states $\ket{1}$ and $\ket{2}$, we can extend $PI_{c,tr}$, analogous to $PI_c$, to cover arbitrary tensor products $\ket{x_1}\otimes\ket{x_2}\otimes \ket{x_3}\otimes \cdots$, $x_i \in \{1,2\}$, and then apply PBR's quantum circuit to arrive at a contradiction. 
\end{proof}

\emph{Discussion.---}The no-go theorem derived here is very strong, stronger than the BKS theorem in two respects. First, the assumption $A$ it requires is weaker than the BKS condition. Second, unlike the BKS theorem, it applies to systems with two-dimensional state spaces. It shows that no deterministic qubit model (or submodel) satisfying assumption $A$, even if it is consistent with the quantum predictions on its domain, can be extended via the composition principle $PI_{c,tr}$ to tensor-product states. This applies to almost all the hidden-variables models reviewed in Ref.~\cite{Belinfante:1973:om}, and includes the models for qubit systems of Kochen and Specker \cite{Kochen:1967:hu} and the ontic finite-dimensional model of Bell \cite{Bell:1966:ph,Lewis:2012:qs}, both of which satisfy assumption $A$ and are known to be quantum consistent.
 
The strength of this result calls attention to the operative composition principle $PI_{c,tr}$ and the possibility of flaws in the physical rationale sketched above. Two possibilities stand out. One is the very idea that pairs ($\lambda_1$, $\lambda_2$) resulting from independent preparations suffice alone to determine probable measurement outcomes on the tensor product---as would be the case, for instance, if we could identify the $\lambda$s with the prepared quantum states. That identification, however, is not an option, being incompatible both with determinism and with overlap. Moreover, at least four distinct tensor products are required for a contradiction, whereas only two quantum states are actually prepared. Thus, we need to recognize the possibility that, in addition to ($\lambda_1$, $\lambda_2$), facts about the context of the actual preparations or subsequent measurements may be needed in order to track the product states. (Ignoring such contextual factors is at the root of the BKS theorem.) But if contextual factors need to be taken into account, a composition principle guaranteeing a complete state $\lambda_c$ that simultaneously tracks all the product states, context free, need not hold. 

A second possible flaw arises from the circumstance that measurements like $M$ (and PBR's quantum circuit) are entangling. They engage the tensor-product structure to generate facts pertaining to the composite as a whole. Like correlations, such facts are not accessible from the isolated subsystems. Contrary to the stated rationale, forming composites, even ones described by tensor-product states, makes available new, relational facts about measurement outcomes.

These reservations about $PI_{c,tr}$ could be taken to undermine the no-go theorem developed here. But both apply in exactly the same way to the compositional assumption $PI_c$ of preparation independence required for the PBR theorem. Thus the composition principles $PI_{c,tr}$ and $PI_c$ stand or fall together, depending on whether we credit the physical rationale or the reservations. 

There is an important constructive message here. Recall that other significant no-go theorems, such as the BKS theorem and the Bell theorem, were based on natural assumptions (noncontextuality, locality) supporting counterfactuals in a classical setting. The positive lesson from those no-go theorems was to throw such assumptions into doubt when imported to the quantum world. We suggest the same lesson here. While entanglement and ``quantum nonseparability'' indicate that simple rules of composition for ``real states'' are unlikely, one might have assumed that when modeling a tensor-product state, the compositional aspect of preparation independence, $PI_c$, should be viable. Our results challenge this assumption. They caution against classical, realist intuitions about how ``real states'' ought to compose, even in the absence of entanglement. It would be interesting to investigate the status of composition rules in other classes of hidden-variables models.

\begin{acknowledgments}
We thank J.\ Malley for useful correspondence.
\end{acknowledgments}


%

\end{document}